\newtheorem{definition}{\bf Definition}
\newtheorem{theorem}{\bf Theorem}
\newtheorem{assumption}{\bf Assumption}
\newtheorem{remark}{\bf Remark}
\newtheorem{notation}{Notation}
\newcommand\norm[1]{\left\lVert#1\right\rVert}
\title{\LARGE \bf
Observer-Based Environment Robust Control Barrier Functions for Safety-critical Control with Dynamic Obstacles}
\author{Ying Shuai Quan$^{1}$, Jian Zhou$^{2}$, Erik Frisk$^{2}$, Chung Choo Chung$^{1^\dag}$ 
\thanks{$^{1}$Y. S. Quan is with Dept. of Electrical Engineering, Chalmers University of Technology, 412 96 Göteborg, Sweden.
        {\tt\small {yeongsu.quan@gmail.com}}}
\thanks{$^{2}$J. Zhou and E. Frisk are with the Division of Vehicular Systems, Dept. of Electrical Engineering, Linköping  University, Linköping, Sweden.
        {\tt\small \{jian.zhou, erik.frisk\}@liu.se}}
\thanks{$^{1^\dag}$C. C. Chung is with Dept. of Electrical Engineering, Hanyang University, Seoul 04763, Republic of Korea.
        {\tt\small {cchung@hanyang.ac.kr}}}
}
\begin{document}

\maketitle
\thispagestyle{empty}
\pagestyle{empty}

\begin{abstract}
    This paper proposes a safety-critical controller for dynamic and uncertain environments, leveraging a robust environment control barrier function (ECBF) to enhance the robustness against the measurement and prediction uncertainties associated with moving obstacles. The approach reduces conservatism, compared with a worst-case uncertainty approach, by incorporating a state observer for obstacles into the ECBF design.
    The controller, which guarantees safety, is achieved through solving a quadratic programming problem. The proposed method's effectiveness is demonstrated via a dynamic obstacle-avoidance problem for an autonomous vehicle, including comparisons with established baseline approaches.
\end{abstract}
\section{Introduction}
Control barrier functions (CBFs)-combined safety-critical controllers have been demonstrated as promising approaches for guaranteeing the safety of control systems in numerous applications~\cite{ames2016control,ames2019control}. In the controller design, CBFs can be used as a safety constraint in an optimization problem to find the control actions subject to a safe set. One of the main challenges in designing a CBF-based safety controller is ensuring strict safety in dynamic environments in the presence of moving surrounding obstacles~\cite{molnar2022safety,hamdipoor2023safe}. This necessitates a safety-critical controller that accounts for the interactions between the control system and its environment, ensuring collision avoidance with dynamic obstacles.

Recent research in the literature has introduced the concept of time-varying CBFs, incorporating the time derivative of the CBF into the constraint of the controller to account for the influence of dynamic environments~\cite{igarashi2019time,chalaki2022barrier,he2021rule,wu2016safety}. These studies mostly assume the availability of perfect environmental information. However, in practical scenarios, the accurate perception of the environmental state is challenging. The environmental information that is reconstructed through sensor measurements usually contains uncertainties, which can potentially result in unsafe behavior of the control system. Therefore, the development of a robust CBF for dynamic environments is important to guarantee the safety of the system in the presence of uncertain moving obstacles.

There are a few studies addressing robust CBF formulations for uncertain dynamic environments. A notable contribution in this domain is found in~\cite{molnar2022safety}, which introduced the concept of environment CBFs (ECBFs) and explored robust ECBFs to mitigate errors induced by time delays when measuring the states of the environment. The design of the robust ECBF was achieved by solving a second-order cone programming (SOCP) problem. To remedy the complexities of solving the problem associated with the SOCP, \cite{hamdipoor2023safe} proposed an alternative robust ECBF approach based on solving a quadratic programming (QP) problem. This method aims to minimize the deviations of the system's control input with a reference input obtained by solving a QP problem of a nominal ECBF.  In both proposed strategies, the designed robust ECBFs considered the worst-case errors of observations of environmental states to enhance the system's robustness against uncertainties in dynamic environments.

Despite the guaranteed robustness achieved through accounting for worst-case observation errors in ECBF design, such approaches can be overly conservative, potentially limiting the efficiency and feasibility of the controller. To address these concerns, this paper proposes a novel observer-based robust ECBF algorithm. This methodology aims to mitigate conservatism by not directly utilizing environmental measurements with the worst-case error, but instead estimating the environmental states through a bounded-error observer. Based on the observed states of the environment, the control input of the system is efficiently obtained by solving a QP problem. The contributions of the research are stated below:
\begin{itemize}
\item An observer-based environment robust ECBF is developed for strict safety guarantee in the presence of dynamic obstacles with bounded measurement disturbances. The method achieves robustness by considering the measurement uncertainties and reduces conservatism by leveraging the observations of the obstacle. 
\item The controller computes the input for the control system by solving a QP problem, allowing for a low-computational complexity and efficient solution.
\item The effectiveness of the methods is demonstrated in simulations for safe control of an autonomous ego vehicle with a dynamic and uncertain obstacle, by comparisons with both nominal ECBF and robust ECBF.
\end{itemize}

\begin{notation}
A continuous function $\alpha:[0,a) \rightarrow [0,\infty)$ for some $a>0$ is said to belong to class $\mathcal{K}$ if it is strictly increasing and $\alpha(0) = 0$.
The Euclidean norm of a vector $x$ is denoted as $\norm{x} = \sqrt{x^T x}$, $\mathbb{R}^n$ and $\mathbb{R}_+^n$ mean the $n$-dimensional real number vector space and $n$-dimensional non-negative real number vector space, respectively. An interval of integers is denoted by $\mathbb{I}_a^b = \{a, a+1, \cdots, b\}$. $I^n$ indicates an $n \times n$ identity matrix. A matrix of appropriate dimension with all elements equal to $0$ is denoted by $\bm{0}$. A matrix $\bm{0} \preceq A$ means it is semi-positive definite. The Lie derivative of a scalar function $h: \mathcal{X}\rightarrow\mathbb{R},  (\mathcal{X} \subset \mathbb{R}^n)$, along a vector field $f: \mathcal{X}\rightarrow\mathbb{R}^n$ is denoted $L_fh(x) = \frac{\partial h}{\partial x}f(x)$.
\end{notation}

\section{Preliminaries}\label{sec:preliminaries}
This section introduces the primary concepts of CBF and ECBF, and the controllers based on the CBFs, providing a foundation for the proposed method in Section~\ref{sec:robust ECBF}.
\subsection{Control Barrier Function (CBF)}
Consider a continuous-time dynamic system with affine control inputs as:
\begin{equation}
\dot x(t) = f(x(t))+g(x(t))u(t),\ x(0) = x_0,
\label{eq:sys}
\end{equation}
where $t\geq 0$ is the time instant, $x(t)\in \mathcal{X} \subseteq \mathbb{R}^{n_x}$ is the state, $u(t)\in \mathcal{U} \subseteq \mathbb{R}^{n_u}$ is the control input, $\mathcal{X}$ and $\mathcal{U}$ are the set of feasible states and inputs, respectively. The functions $f: \mathcal{X} \rightarrow \mathbb{R}^{n_x}$ and $g: \mathcal{X} \rightarrow \mathbb{R}^{n_x\times n_u}$ are locally Lipschitz continuous on $\mathcal{X}$. We assume there exists a feedback controller $u=k(x)$ with $k: \mathcal{X} \rightarrow \mathcal{U}$ that is locally Lipschitz continuous, and the system $(\ref{eq:sys})$ has a unique solution for $t \geq 0$. In the rest of the paper, the time variable $t$ is omitted from $x(t)$ and $u(t)$ to simplify the notation, such that $x(t) := x$ and $u(t) := u$. This also applies to the variables defined in \eqref{eq:aug_sys} and \eqref{eq:obstacle state}.

We define a set $\mathcal{C}\subset \mathcal{X}$ representing the safe states of system $(\ref{eq:sys})$ as a zero-superlevel set of a continuously differentiable function $h: \mathcal{X} \rightarrow \mathbb{R}$, i.e.,
\begin{equation}\label{eq: C set}
\begin{split}
\mathcal{C} = \{  x  \in \mathcal{X} : h(x) \geq 0 \}.
\end{split}
\end{equation}

The safe state of~\eqref{eq:sys} is defined as the state $x$ remaining within the set $\mathcal{C}$ for all $t\geq 0$. Therefore, the objective of a safety-guaranteed controller of system~\eqref{eq:sys} is to render the set $\mathcal{C}$ forward invariant. This requires guaranteeing that $\forall x_0 \in \mathcal{C}$, the state satisfies $x\in\mathcal{C}$, $\forall t\geq 0$. This control objective can be efficiently achieved by leveraging the CBF~\cite{ames2016control}.

\begin{definition}
The function $h(x)$ is a CBF for system \eqref{eq:sys} if there exists a class $\mathcal{K}$ function $\alpha$, such that $\forall x\in\mathcal{C}$,
\begin{equation}\label{eq:h function}
\sup_{u\in \mathcal{U}}L_fh(x)+L_gh(x)u \geq -\alpha(h(x)).
\end{equation}
\end{definition}

Following the definition in~\eqref{eq:h function}, the safety guarantee of CBF is formally established in Theorem~\ref{theo:CBF definition}~\cite{ames2016control}.

\begin{theorem}\label{theo:CBF definition}
If $h(x)$ is a CBF for \eqref{eq:sys}, then any locally Lipshitz continous controller $u=k(x)$ satisfying 
\begin{equation}
L_fh(x)+L_gh(x)u \geq -\alpha(h(x)), \forall x \in \mathcal{C} \label{eq:CBF_condition}
\end{equation}
renders $\mathcal{C}$ forward invariant (safe), i.e., it ensures $\forall x_0\in \mathcal{C} \Rightarrow x \in \mathcal{C}, \forall t \geq 0$.
\end{theorem}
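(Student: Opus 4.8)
The plan is to reduce the claim to a statement about a scalar differential inequality along closed-loop trajectories and then invoke a comparison argument. First I would fix the controller $u = k(x)$ and consider the closed-loop vector field $F(x) := f(x) + g(x)k(x)$. Since $f$ and $g$ are locally Lipschitz on $\mathcal{X}$ and $k$ is locally Lipschitz by hypothesis, $F$ is locally Lipschitz, so for every $x_0 \in \mathcal{C}$ the closed-loop system admits a unique maximal solution $x(t)$ on some interval $[0, \tau)$. This well-posedness is what makes the invariance statement meaningful in the first place.

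Next I would compute the evolution of the safety function along this solution. By the chain rule, $\frac{d}{dt} h(x(t)) = \frac{\partial h}{\partial x} F(x) = L_f h(x) + L_g h(x) k(x)$. Substituting the controller constraint \eqref{eq:CBF_condition}, which holds on all of $\mathcal{C}$, yields the differential inequality $\dot{h}(x(t)) \geq -\alpha(h(x(t)))$ for as long as $x(t) \in \mathcal{C}$.

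Then I would apply a comparison argument. Let $y(t)$ solve the scalar problem $\dot{y} = -\alpha(y)$ with $y(0) = h(x_0) \geq 0$. Because $\alpha$ is class $\mathcal{K}$, we have $\alpha(0) = 0$, so $y \equiv 0$ is an equilibrium and any nonnegative initial value produces a solution with $y(t) \geq 0$ for all $t$; trajectories cannot cross zero from above. The comparison principle then gives $h(x(t)) \geq y(t) \geq 0$, i.e., $x(t)$ never leaves $\mathcal{C}$, which together with uniqueness lets the trajectory be extended for all $t \geq 0$ and establishes forward invariance of $\mathcal{C}$.

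The main obstacle I anticipate is the rigorous use of the comparison step, since a class $\mathcal{K}$ function need not be locally Lipschitz at the origin, so the comparison ODE may fail to have a unique solution and the textbook comparison lemma does not apply verbatim. I would handle this in one of two ways: either by appealing to the boundary (Nagumo / sub-tangentiality) characterization of invariance, observing that on $\partial\mathcal{C}$ where $h = 0$ the inequality forces $\dot{h} \geq -\alpha(0) = 0$ so the vector field never points strictly out of $\mathcal{C}$; or by invoking a version of the comparison principle valid for merely continuous right-hand sides via the maximal solution. Making this final step fully precise is the delicate part of the argument.
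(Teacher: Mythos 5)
Your argument is correct and takes essentially the same route as the source the paper relies on: the paper itself gives no proof, deferring to Ames et al., whose argument is precisely the comparison-lemma/Nagumo reasoning you outline, including the caveat about class-$\mathcal{K}$ functions that are not locally Lipschitz at the origin (handled there by working with the maximal solution of the comparison ODE, or by assuming $\alpha$ locally Lipschitz and $h$ having a regular zero level set). No gap to report.
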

\begin{proof}
The proof can be found in~\cite{ames2016control, ames2019control}.
\end{proof}
\begin{remark}
Condition \eqref{eq:CBF_condition} is often used in designing optimization-based controllers. Given a desired control input $u_{\rm des} \in \mathcal{U}$, a quadratic programming (QP)-based safe controller using \eqref{eq:CBF_condition} can be formulated as
\begin{equation}
\begin{split}
k(x) &= \arg\min_{u \in \mathcal{U}} \norm{u-u_{\rm des}}^2\\
 &~~~~~~~~{\rm{s.t.}}~ L_fh(x)+L_gh(x)u \geq -\alpha(h(x)).
\end{split}
\label{eq:CBF_QP}
\end{equation}
\end{remark}
\subsection{Environmental Control Barrier Function (ECBF)}
In dynamic driving scenarios where the safe set $\mathcal{C}$ depends on the environment, the safety of system~\eqref{eq:sys} is affected by the time-varying environment state $e \in E \subseteq \mathbb{R}^{n_e}$ with $\dot e \in \mathcal{E} \subseteq \mathbb{R}^{n_e}$ and $e(0) = e_0$. The safety-critical controller in dynamic environments can be designed based on ECBF~\cite{molnar2022safety, hamdipoor2023safe}. An ECBF is a time-varying extension of CBF, where both $e$ and $\dot{e}$ are considered in the design of the safe controller. For the design of the ECBF, we first formulate an augmented system consisting of system state $x$ and the environment state $e$ as
\begin{equation}
    \dot\xi = F(\xi)+G(\xi)u_\xi,\ \xi(0) = \xi_0,
\label{eq:aug_sys}
\end{equation}
with
\begin{align*}
\begin{split}
    \xi = 
    \begin{bmatrix}
        x\\e
    \end{bmatrix},
    u_\xi = 
    \begin{bmatrix}
        u\\\dot e
    \end{bmatrix},
    F(\xi) = 
    \begin{bmatrix}
        f(x)\\0
    \end{bmatrix},
    G(\xi) = 
    \begin{bmatrix}
        g(x)\\I
    \end{bmatrix}.
\end{split}
\end{align*}

The safe set $\mathcal{C}_\xi$ corresponding to \eqref{eq:aug_sys} is then defined as
\begin{equation}
\mathcal{C}_\xi = \{  \xi  \in \mathcal{X}\times E : H(\xi) \geq 0 \},
\label{eq:C_xi}
\end{equation}
where $H: \mathcal{X} \times E \rightarrow \mathbb{R}$ is a continuously differentiable function. Definition~\ref{def:ECBF} provides the condition that $H(\xi)$ should satisfy as an ECBF.
\begin{definition}\label{def:ECBF}
The function $H(\xi)$ is an ECBF for (\ref{eq:aug_sys}) if there exists a class $\mathcal{K}$ function $\alpha$, such that $\forall \xi\in\mathcal{C_\xi}$ and $\forall \dot e \in\mathcal{E}$
\begin{equation}\label{eq:ECBF}
\sup_{u\in \mathcal{U}}L_F H(\xi)+L_G H(\xi)u_\xi > -\alpha(H(\xi)).
\end{equation}
\end{definition}

Following the definition of ECBF, the safety guarantee of the control system~\eqref{eq:sys} is shown in Theorem~\ref{theo:safety ECBF}~\cite{molnar2022safety}.
\begin{theorem}\label{theo:safety ECBF}
If $H(\xi)$ is an ECBF for (\ref{eq:aug_sys}), then any locally Lipshitz continuous controller $u=K(\xi,\dot e)$ satisfying 
\begin{equation}
L_F H(\xi)+L_G H(\xi)u_\xi \geq -\alpha(H(\xi)), \forall \xi \in \mathcal{C_\xi}
\label{eq:ECBF_condition}
\end{equation}
renders $\mathcal{C_\xi}$ forward invariant (safe), i.e., it ensures $\forall \xi_0\in \mathcal{C_\xi} \Rightarrow \xi \in \mathcal{C_\xi}, \forall t \geq 0$.
\end{theorem}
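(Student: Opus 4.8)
The plan is to reduce Theorem~\ref{theo:safety ECBF} to the standard CBF safety result of Theorem~\ref{theo:CBF definition} by treating the augmented system~\eqref{eq:aug_sys} as an ordinary control-affine system in the variable $\xi$ with control input $u_\xi$. The key observation is that~\eqref{eq:aug_sys}, $\mathcal{C}_\xi$ in~\eqref{eq:C_xi}, and condition~\eqref{eq:ECBF_condition} have exactly the structure of~\eqref{eq:sys}, \eqref{eq: C set}, and \eqref{eq:CBF_condition}, respectively, with $(x,f,g,u,h,\mathcal{C})$ replaced by $(\xi,F,G,u_\xi,H,\mathcal{C}_\xi)$. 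So once the augmented dynamics are shown to satisfy the regularity hypotheses of Theorem~\ref{theo:CBF definition}, forward invariance of $\mathcal{C}_\xi$ follows immediately.

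First I would fix any $\xi_0 \in \mathcal{C}_\xi$ and any admissible environment signal $\dot e \in \mathcal{E}$, and verify that along closed-loop trajectories the time derivative of $H$ obeys $\dot H(\xi) = L_F H(\xi) + L_G H(\xi) u_\xi \geq -\alpha(H(\xi))$, which is precisely the hypothesis~\eqref{eq:ECBF_condition} evaluated on the realized control $u = K(\xi,\dot e)$ and the realized $\dot e$. Second, I would invoke the comparison lemma: since $\alpha$ is a class $\mathcal{K}$ function and hence locally Lipschitz (or at least such that the comparison system admits a unique solution), the scalar differential inequality $\dot H \geq -\alpha(H)$ with $H(\xi_0) \geq 0$ forces $H(\xi(t)) \geq 0$ for all $t \geq 0$, because $H = 0$ is an equilibrium of $\dot y = -\alpha(y)$ and trajectories starting nonnegative cannot cross it downward. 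This yields $\xi(t) \in \mathcal{C}_\xi$ for all $t \geq 0$, i.e. forward invariance.

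The one genuinely new ingredient compared to the scalar CBF case is the presence of the exogenous, possibly unknown signal $\dot e$ entering through $G(\xi)u_\xi$. The main obstacle is therefore ensuring that the argument holds uniformly over all admissible $\dot e \in \mathcal{E}$: the controller $K(\xi,\dot e)$ must be shown to satisfy~\eqref{eq:ECBF_condition} not for a single nominal $\dot e$ but for whatever environment realization actually occurs, so that the differential inequality is valid along the true trajectory. Because Definition~\ref{def:ECBF} requires the CBF condition to hold for all $\dot e \in \mathcal{E}$, this quantification is exactly what guarantees the existence of such a controller; I would make explicit that the invariance conclusion is driven by the realized $\dot e$ and does not depend on predicting it. A secondary technical point is existence and uniqueness of the closed-loop solution $\xi(t)$: this requires local Lipschitz continuity of $F$, $G$, $K$, and of $\dot e$ as a function of time, which I would note follows from the assumed regularity of $f$, $g$, $k$ inherited by $F$, $G$, $K$ together with the standing assumption on the environment signal.

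Once these points are secured, the remainder is a direct citation of Theorem~\ref{theo:CBF definition} applied to~\eqref{eq:aug_sys}, so I would keep the exposition short and emphasize that the robustness to $\dot e$ is the conceptual contribution rather than a lengthy new calculation.
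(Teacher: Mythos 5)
Your proposal is correct and follows essentially the same route the paper relies on: the paper itself gives no argument for Theorem~\ref{theo:safety ECBF} beyond citing~\cite{molnar2022safety}, and the standard proof there is exactly your reduction --- view~\eqref{eq:aug_sys} as a control-affine system in $\xi$ with input $u_\xi$, observe that~\eqref{eq:ECBF_condition} yields $\dot H \geq -\alpha(H)$ along the closed-loop trajectory for the \emph{realized} $\dot e$, and conclude $H(\xi(t))\geq 0$ by the comparison lemma. Your two explicit caveats (uniform quantification over $\dot e \in \mathcal{E}$, and regularity of $\alpha$ and of the closed-loop vector field so that the comparison argument and well-posedness go through) are precisely the points that need care, so nothing is missing.
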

\begin{proof}
The proof can be found in~\cite{molnar2022safety}.
\end{proof}
\begin{remark}
Using the results of Theorem~\ref{theo:safety ECBF}, a QP-based safety-critical controller for the control system~\eqref{eq:sys} in dynamic environments can be formulated as:
\begin{equation}\small \label{eq:ECBF QP}
\begin{split}
    K(\xi,\dot e) &= \arg\min_{u \in \mathcal{U}} \norm{u-u_{\rm des}}^2\\
    &~~~~~~~~{\rm s.t.}~ L_F H(\xi)+L_G H(\xi)u_\xi \geq -\alpha(H(\xi)).
\end{split}
\end{equation}
where $u_{\rm des} \in \mathcal{U}$ is a desired input. Note that the decision variable of \eqref{eq:ECBF QP} is $u$ instead of $u_\xi$ of \eqref{eq:aug_sys}, as $\dot e$ is a known parameter for the controller.
\end{remark}
\section{Observer-Based Robust ECBF}\label{sec:robust ECBF}
The design of ECBF relies on the knowledge of the environment state $e$ and its derivative $\dot e$, which are usually estimated with uncertainties. Therefore, it becomes necessary to robustify the safe controller against measurement and prediction uncertainties of $e$ and $\dot{e}$. Instead of using worst-case uncertainty bounds of the environment states in ECBFs as in \cite{molnar2022safety, hamdipoor2023safe}, this paper proposes an observer-based robust ECBF to reduce the conservatism of the controller and maintain the robustness of the performance.

\subsection{Bounded-Error Observer for Surrounding Obstacles}\label{sec:BE observer theory}
In the case of obstacle avoidance, the environment state $e$ typically contains positional elements of the surrounding obstacle's state $x^s$. The state $x^s$ of the surrounding obstacle can be described by a general nonlinear system model as
\begin{equation}\label{eq:obstacle state}
\begin{split}
\dot x^s &= f^s(x^s, u^s),\ x^s(0) = x^s_0\\
y^s &=  c(x^s) +  c_w(x^s)w,
\end{split}
\end{equation}
where {$x^s\in \mathcal{X}^s \subseteq \mathbb{R}^{n_x^s}$} is the state of the surrounding obstacle, $u^s\in \mathcal{U}^s \subseteq \mathbb{R}^{n_u^s}$ is the control input, and $y^s\in  \mathbb{R}^{n_y^s}$ is the output. The variable $w: \mathbb{R}_{+} \rightarrow \mathbb{R}^{n_w}$ is the measurement noise, which is assumed to be piece-wise continuous and bounded with $\norm{w}_\infty \leq  \bar w$ for the known bound $\bar w < \infty$. The functions
$f: \mathcal{X}^s \times \mathcal{U}^s \rightarrow \mathcal{X}^s, c: \mathcal{X}^s \rightarrow \mathbb{R}^{n_y^s}$ and $c_w: \mathcal{X}^s \rightarrow \mathbb{R}^{n^s_y \times n^s_v}$ are locally Lipschitz continuous.

For system~\eqref{eq:obstacle state}, the observer that maintains a state estimate $\hat x^s \in \mathcal{X}^s$ can be formulated as
\begin{equation}
\dot{\hat x}^s = p(\hat x^s, y^s)+q(\hat x^s, y^s)u^s,\ \hat{x}^s(0) = {\hat x}^s_0,
\label{eq:observer}
\end{equation}
where functions $p: \mathcal{X}^s\times \mathbb{R}^{n_y^s}\rightarrow \mathbb{R}^{n_x^s}$, and $q: \mathcal{X}^s\times \mathbb{R}^{n_y^s}\rightarrow \mathbb{R}^{n_x^s\times n_u^s}$ are locally Lipschitz.
This paper applies a bounded-error (BE) observer, as defined in Definition~\ref{def:BE observer}, to guarantee the boundedness of the estimation error.
\begin{definition}\label{def:BE observer}
If an initial bounded set $\mathcal{D}(\hat x^s_0)\subset \mathcal{X}^s$ and a time-varying bounded set $\mathcal{P}(t,\hat x^s)\subset \mathcal{X}^s$ satisfy
\begin{equation}
x^s_0 \in \mathcal{D}(\hat x^s_0) 
\Rightarrow 
x^s\in\mathcal{P}(t,\hat x^s), \forall t \geq 0,
\label{eq:BE_con}
\end{equation}
then the observer \eqref{eq:observer} is a BE observer.
\end{definition}

Definition~\ref{def:BE observer} presents the condition of a general BE observer applicable in the proposed method. The design of the BE observer will be specified within the context of a particular case study in Section~\ref{sec: BE observer case study}. Following the BE observer, the estimation error of the environment state will be derived in Section~\ref{sec:estimation error}, and Section~\ref{sec:robust ECBF sub} designs the robust ECBF based on the estimation error.
\subsection{Estimation Errors}\label{sec:estimation error}
Denote by $\hat e$ and $\hat{\dot e}$ as the estimate of $e$ and $\dot e$, respectively, the estimation error of $e$ and $\dot{e}$ are defined as
\begin{equation}
d_{e} = e-\hat e,~
d_{\dot e} = \dot e - \hat{\dot e}.
\end{equation}
\begin{assumption}
These estimates have known bounds $\varepsilon_1\in\mathbb{R}_{+}$ and $\varepsilon_2\in\mathbb{R}_{+}$, i.e.,
\begin{equation}\label{eq:disturbance bound}
    \norm{d_e}\leq \varepsilon_1,~
\norm{d_{\dot e}}\leq \varepsilon_2.
\end{equation}
\end{assumption}

Based on the estimated state $\hat{x}^s$ and $\dot{\hat{x}}^s$ of the obstacle, the estimation of the environmental state in the design of the robust ECBF can be obtained through
\begin{equation}
\hat e = c_e(\hat x^s),\ \hat{\dot e} = c_{\dot e}(\hat {\dot x}^s),
\end{equation}
where functions $c_e$ and $c_{\dot e}$ are locally Lipschitz continuous with Lipschitz coefficients $\mathcal{L}_{c_e}$, and $\mathcal{L}_{c_{\dot e}}$.
Thus, the parameters $\varepsilon_1$ and $\varepsilon_2$ in \eqref{eq:disturbance bound} can be found as
\begin{equation}
\varepsilon_1 = \mathcal{L}_{c_e}\norm{x^s-\hat{x}^s},\
\varepsilon_2 = \mathcal{L}_{c_{\dot e}}\norm{x^s-\hat{x}^s},
\label{eq:e_bound}
\end{equation}
where $\norm{x^s-\hat{x}^s}$ is found via the time-varying bounded set $\mathcal{P}(t,\hat x^s)$ as defined in \eqref{eq:BE_con}. This will be specified with the design of the BE observer in Section~\ref{sec: case study}.

\subsection{Robust ECBF}\label{sec:robust ECBF sub}
The uncertain augmented system in \eqref{eq:aug_sys} based on the estimation of the environment state is rewritten as
\begin{equation}
\begin{split}
\dot\xi &= F(\xi)+G(\xi)\hat u_\xi +G_d d_{\dot e},
\end{split}
\label{eq:dis_augsys}
\end{equation}
where 
$G_d = \begin{bmatrix}{\bm 0} \\ I^{n_e}\end{bmatrix}$, $\hat u_\xi = \begin{bmatrix} u \\ \hat{\dot e}\end{bmatrix}$. Then a robust ECBF for \eqref{eq:dis_augsys} can be found based on Definition~\ref{def:ECBF for aug est system} and Theorem~\ref{theo:ECBF for aug est system} as below.
\begin{definition}\label{def:ECBF for aug est system}
A continuously differentiable function $H(\xi):\mathcal{X}\times E \rightarrow \mathbb{R}$ is a robust ECBF for system (\ref{eq:dis_augsys}) if there exists a class $\mathcal{K}$ function $\alpha$, such that $\forall \xi\in\mathcal{C}_\xi$
\begin{equation}\small
\begin{split}
    \sup_{u\in U}
L_FH(\xi)+L_GH(\xi)\hat u_\xi +\alpha(H(\xi))\geq \norm{L_{G_d}H(\xi)}\varepsilon_2.
\end{split}
\label{eq:RECBF}
\end{equation}
\end{definition}
\begin{theorem}\label{theo:ECBF for aug est system}
If $H(\xi)$ is a robust ECBF for (\ref{eq:dis_augsys}), then any locally Lipshitz continuous controller $u=K(\xi,\hat{\dot e})$ satisfying 
\begin{equation}
L_FH(\xi)+L_GH(\xi)\hat u_\xi +\alpha(H(\xi))\geq \norm{L_{G_d}H(\xi)}\varepsilon_2,
\label{eq:RECBF_condition}
\end{equation}
with $\forall \xi \in \mathcal{C_\xi}$ renders $\mathcal{C_\xi}$ forward invariant (safe), i.e., it ensures $\forall \xi_0\in \mathcal{C_\xi} \Rightarrow \xi \in \mathcal{C_\xi}, \forall t \geq 0$.
\end{theorem}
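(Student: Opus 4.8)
The plan is to reduce the robust condition back to the nominal CBF inequality and then invoke the forward-invariance machinery already established in Theorem~\ref{theo:CBF definition}. Concretely, I want to show that along any trajectory of the disturbed augmented system \eqref{eq:dis_augsys} driven by a controller satisfying \eqref{eq:RECBF_condition}, the barrier function obeys $\dot H(\xi) \geq -\alpha(H(\xi))$ for all $\xi \in \mathcal{C}_\xi$; the safety conclusion then follows exactly as in the nominal case. The local Lipschitz continuity of $F$, $G$, and the controller ensures solutions exist and are unique, so $\dot H$ is well defined along the closed loop.

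First, I would differentiate $H$ along the closed-loop dynamics \eqref{eq:dis_augsys}. Since $H$ is continuously differentiable and $\dot\xi = F(\xi)+G(\xi)\hat u_\xi + G_d d_{\dot e}$, the chain rule gives
\begin{equation*}
\dot H(\xi) = L_F H(\xi) + L_G H(\xi)\hat u_\xi + L_{G_d} H(\xi)\, d_{\dot e},
\end{equation*}
where the last term collects the unknown contribution of the prediction error $d_{\dot e}$.

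The core step is to bound this disturbance term using only the magnitude bound $\norm{d_{\dot e}} \leq \varepsilon_2$ from \eqref{eq:disturbance bound}. By the Cauchy--Schwarz inequality,
\begin{equation*}
L_{G_d} H(\xi)\, d_{\dot e} \geq -\norm{L_{G_d} H(\xi)}\,\norm{d_{\dot e}} \geq -\norm{L_{G_d} H(\xi)}\,\varepsilon_2.
\end{equation*}
Substituting this lower bound and then invoking the robust ECBF condition \eqref{eq:RECBF_condition}, rearranged so that $L_F H(\xi) + L_G H(\xi)\hat u_\xi - \norm{L_{G_d} H(\xi)}\varepsilon_2 \geq -\alpha(H(\xi))$, yields $\dot H(\xi) \geq -\alpha(H(\xi))$ for every $\xi \in \mathcal{C}_\xi$.

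Finally, having recovered the standard CBF inequality, I would conclude forward invariance of $\mathcal{C}_\xi$ directly from Theorem~\ref{theo:CBF definition}: since $\alpha$ is class $\mathcal{K}$ with $\alpha(0)=0$, the comparison lemma applied to $\dot y = -\alpha(y)$, $y(0) = H(\xi_0) \geq 0$, guarantees $H(\xi) \geq 0$ for all $t \geq 0$, hence $\xi_0 \in \mathcal{C}_\xi \Rightarrow \xi \in \mathcal{C}_\xi$. The only genuinely new ingredient relative to the nominal proof is the Cauchy--Schwarz estimate, whose role is precisely to absorb the worst-case disturbance into the robustness margin $\norm{L_{G_d} H(\xi)}\varepsilon_2$; once that term is shown to dominate the actual disturbance effect, the argument is identical to the undisturbed setting, so I anticipate no serious technical obstacle.
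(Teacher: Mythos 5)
Your proof is correct and follows the standard input-to-state-safety argument: differentiate $H$ along \eqref{eq:dis_augsys}, absorb the disturbance term via Cauchy--Schwarz and the bound $\norm{d_{\dot e}}\leq\varepsilon_2$, and recover $\dot H(\xi)\geq-\alpha(H(\xi))$ so that forward invariance follows as in the nominal case. The paper itself gives no argument here but simply defers to~\cite{alan2021safe}, and the proof in that reference proceeds in essentially the same way, so you have in effect supplied the missing details rather than taken a different route.
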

\begin{proof}
The proof can be found in~\cite{alan2021safe}.
\end{proof}

The robust ECBF based on Theorem~\ref{theo:ECBF for aug est system} only considers the influence of $d_{\dot{e}}$. In addition to this term, the measurement noise $d_{e}$ in the system (\ref{eq:dis_augsys}) also needs to be considered in the robust ECBF. To handle this problem, this paper designs a feedback controller, which is adapted from the method in~\cite{agrawal2022safe}, to guarantee the safety of the control system considering both $d_e$ and $d_{\dot e}$. This design requires that the functions $L_FH(\xi), L_GH(\xi), L_{G_d}H(\xi)$ and $\alpha(H(\xi))$ are Lipschitz continuous in argument $e$ on $\mathcal{C}_\xi$, i.e.,
\begin{equation}
\begin{split}
    L_FH(\xi)- L_FH(\hat\xi)
    &\geq
    -\mathcal{L}_{L_F}
    \norm{e-\hat e},\\	
    ||L_{G_d}H(\xi)||- ||L_{G_d}H(\hat\xi)||
    &\geq
    -\mathcal{L}_{L_{G_d}}
    \norm{e-\hat e},\\
    [L_{G}H(\xi)]_i- [L_{G}H(\hat\xi)]_i
    &\geq
    -\mathcal{L}_{L_{G_i}}
    \norm{e-\hat e},\\
    \alpha(H(\xi))-\alpha(H(\hat\xi))
    &\geq
    -\mathcal{L}_{\alpha\circ H}
    \norm{e-\hat e},
\end{split}
\label{eq:lipschitz}
\end{equation}
where $\mathcal{L}_{L_F}, \mathcal{L}_{L_G}, \mathcal{L}_{L_{G_d}}$ and $\mathcal{L}_{\alpha \circ H}$ are Lipschitz coefficients, $\hat \xi =\begin{bmatrix} x \\ \hat{e}\end{bmatrix}$, and $[L_{G}H(\xi)]_i$, $i \in \mathbb{I}_1^{n_u+n_e}$, refers to the $i$-th element of $L_{G}H(\xi)$. This leads to sufficient conditions for designing the robust ECBF for system~\eqref{eq:sys}, as shown in Theorem~\ref{theo:final ECBF}.
\begin{theorem}\label{theo:final ECBF}
If $H(\xi)$ is an ECBF for \eqref{eq:aug_sys}, and  $\varepsilon_1$ and $\varepsilon_2$ can be found via \eqref{eq:e_bound}, then a locally Lipschitz continuous controller 
\begin{equation}\small
\begin{split}
    k(\hat\xi, \hat{\dot e})= &\arg\min_{u\in\mathbb{R}^{n_u}} \norm{u-u_{\rm des}}^2\\
    & {\rm s.t.}~a(\hat \xi)+ \sum^{n_u+n_e}_{i=1} \min \{ b^{-}_i(\hat \xi)\hat u_{\xi_i}, b^{+}_i(\hat \xi)\hat u_{\xi_i} \}\geq 0
\end{split}
\label{eq:finalECBF}
\end{equation}
with
\begin{equation}\small
\begin{split}
    a(\hat\xi) &= L_FH(\hat\xi)-||L_{G_d}H(\hat\xi)||\varepsilon_2+\alpha(H(\hat\xi))-\Delta_a(\hat\xi),\\
    b^{-}_i &= [L_GH(\hat \xi)]_i- \Delta_{b_i}(\hat\xi), i \in \mathbb{I}_1^{n_u+n_e},\\
    b^{+}_i &= [L_GH(\hat \xi)]_i+ \Delta_{b_i}(\hat\xi), i \in \mathbb{I}_1^{n_u+n_e}, \\
    \Delta_a(\hat\xi) &= (\mathcal{L}_{L_F}+ \mathcal{L}_{L_{G_d}}\varepsilon_2+ \mathcal{L}_{\alpha \circ H})\varepsilon_1,\\
    \Delta_{b_i}(\hat\xi) &= \mathcal{L}_{L_{G_i}}\varepsilon_1, i \in \mathbb{I}_1^{n_u+n_e},
\end{split}
\end{equation}
renders $ \mathcal{C}_\xi$ forward invariant (safe) $\forall \hat \xi \in \mathcal{C}_\xi$ and $\forall  ||\hat{\dot e}||\leq \varepsilon_2$, i.e., it ensures $\forall \xi_0\in \mathcal{C}_\xi \Rightarrow \xi \in  \mathcal{C}_\xi, \forall t \geq 0$.
\end{theorem}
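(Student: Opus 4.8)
The plan is to show that the constraint enforced in \eqref{eq:finalECBF}, which is written purely in terms of the available estimates $\hat\xi$ and $\hat u_\xi$, is a sufficient condition for the \emph{true} robust-ECBF inequality $L_FH(\xi)+L_GH(\xi)\hat u_\xi+L_{G_d}H(\xi)d_{\dot e}+\alpha(H(\xi))\geq 0$ to hold along the trajectories of the uncertain system \eqref{eq:dis_augsys}. Once this inequality is established for all $\xi\in\mathcal{C}_\xi$, the conclusion follows directly from the comparison-lemma argument underlying Theorem~\ref{theo:safety ECBF}: since $\dot H(\xi)=L_FH(\xi)+L_GH(\xi)\hat u_\xi+L_{G_d}H(\xi)d_{\dot e}\geq-\alpha(H(\xi))$ and the controller is locally Lipschitz (so that \eqref{eq:dis_augsys} admits a unique solution), $\mathcal{C}_\xi$ is rendered forward invariant.

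First I would differentiate $H$ along \eqref{eq:dis_augsys} and lower-bound $\dot H+\alpha(H)$ term by term, replacing the unknown true argument $\xi$ by the available estimate $\hat\xi$, recalling that $\xi$ and $\hat\xi$ differ only in the environment component, with $\norm{e-\hat e}=\norm{d_e}\leq\varepsilon_1$. The drift and class-$\mathcal{K}$ terms are handled immediately by the Lipschitz bounds \eqref{eq:lipschitz}, giving $L_FH(\xi)\geq L_FH(\hat\xi)-\mathcal{L}_{L_F}\varepsilon_1$ and $\alpha(H(\xi))\geq\alpha(H(\hat\xi))-\mathcal{L}_{\alpha\circ H}\varepsilon_1$. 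For the disturbance term I would apply Cauchy–Schwarz, $L_{G_d}H(\xi)d_{\dot e}\geq-\norm{L_{G_d}H(\xi)}\varepsilon_2$, and then invoke the Lipschitz bound on $\norm{L_{G_d}H(\cdot)}$ to replace $\norm{L_{G_d}H(\xi)}$ by the computable upper bound $\norm{L_{G_d}H(\hat\xi)}+\mathcal{L}_{L_{G_d}}\varepsilon_1$. Collecting the three $\varepsilon_1$-proportional remainders reproduces exactly the term $\Delta_a(\hat\xi)$, so that the drift, disturbance, and class-$\mathcal{K}$ contributions sum to at least $a(\hat\xi)$.

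The main obstacle is the bilinear control term $L_GH(\xi)\hat u_\xi=\sum_i[L_GH(\xi)]_i\hat u_{\xi_i}$, because the sign of each component $\hat u_{\xi_i}$ is not known in advance, so one cannot bound $[L_GH(\xi)]_i$ on a single side. The key step, adapted from the approach in~\cite{agrawal2022safe}, is to use the two-sided estimate $[L_GH(\xi)]_i\in[b^{-}_i(\hat\xi),\,b^{+}_i(\hat\xi)]$ furnished by \eqref{eq:lipschitz}, from which $[L_GH(\xi)]_i\hat u_{\xi_i}\geq\min\{b^{-}_i(\hat\xi)\hat u_{\xi_i},\,b^{+}_i(\hat\xi)\hat u_{\xi_i}\}$ holds irrespective of the sign of $\hat u_{\xi_i}$; summing over $i$ recovers precisely the robustified term appearing in \eqref{eq:finalECBF}. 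Assembling all the lower bounds then yields $\dot H(\xi)+\alpha(H(\xi))\geq a(\hat\xi)+\sum_i\min\{b^{-}_i(\hat\xi)\hat u_{\xi_i},\,b^{+}_i(\hat\xi)\hat u_{\xi_i}\}$, whose right-hand side is non-negative exactly by the constraint satisfied by $k(\hat\xi,\hat{\dot e})$ under the standing bound $\norm{\hat{\dot e}}\leq\varepsilon_2$. This closes the argument, the ECBF property of $H$ guaranteeing that the feasible set of the QP is nonempty so that such a locally Lipschitz controller exists.
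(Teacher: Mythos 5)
Your proposal is correct and follows essentially the same route as the paper's proof: lower-bounding $L_FH$, $\norm{L_{G_d}H}$, and $\alpha(H)$ at the true state by their estimated counterparts minus the Lipschitz remainders (which collect into $\Delta_a$), handling the sign-indefinite control term via the two-sided bound and the componentwise $\min$, and then concluding safety from the robust-ECBF condition of Theorem~\ref{theo:ECBF for aug est system}. Your version is in fact slightly more careful than the paper's, since you make the Cauchy--Schwarz step for the $d_{\dot e}$ term explicit and correctly note that the $\min$ is what makes the bound valid regardless of the sign of $\hat u_{\xi_i}$.
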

\begin{proof}
Firstly, with (\ref{eq:lipschitz}), there exists  
\begin{equation}
\begin{split}
    L_GH(\xi)\hat u_{\xi} &= \sum^{n_u+n_e}_{i=1}  [L_GH(\xi)]_i \hat u_{\xi_i}\\
&\geq 
\sum^{n_u+n_e}_{i=1}  ([L_GH(\hat\xi)]_{i}  -\mathcal{L}_{L_{G_i}}\norm{e-\hat e})\hat u_{\xi_i}\\
&\geq 
\sum^{n_u+n_e}_{i=1} 
\min \{ b^{-}_i(\hat \xi)\hat u_{\xi_i}, b^{+}_i(\hat \xi)\hat u_{\xi_i} \}.
\end{split}
\end{equation}

Note that here we suppose $\text{sign}(b^{-}_i(\hat \xi))=\text{sign}(b^{+}_i(\hat \xi))$ $\forall \hat\xi\in\mathcal{C}_\xi$ under the assumption that the estimation error can be sufficiently small. Secondly we show that with (\ref{eq:lipschitz}), the following condition holds that
\begin{equation}
\begin{split}
    &L_FH(\xi)-\norm{L_{G_d}H(\xi)}\varepsilon_2+\alpha(H(\xi))
        \geq \\
    &(L_FH(\hat\xi)-||L_{G_d}H(\hat\xi)||\varepsilon_2+\alpha(H(\hat\xi)))-\\
    &(\mathcal{L}_{L_F}+ \mathcal{L}_{L_{G_d}}\varepsilon_2+ \mathcal{L}_{\alpha \circ H})\norm{e-\hat e} \geq a(\hat \xi).
\end{split}
\end{equation}

With (\ref{eq:finalECBF}) satisfied, we can guarantee that $H(\xi)$ is a robust ECBF and by Theorem~\ref{theo:ECBF for aug est system}, the control system is safe.
\end{proof}
\begin{remark}
The application of the proposed method is not limited to single obstacle scenarios, as the observations of multiple obstacles can be directly incorporated into the augmented system \eqref{eq:dis_augsys}. This paper focuses on showcasing the method's ability to preserve robustness while reducing conservatism. In environments with multiple uncertain obstacles, challenges like coordinating between obstacles and ensuring computational efficiency have to be considered. Addressing these problems presents topics for future research.
\end{remark}

\section{Case Study}\label{sec: case study}
\begin{figure*}[t]
\SetFigLayout{4}{1}
\centering
\subfigure[Nominal ECBF.]{\includegraphics[width=1.35\columnwidth]{./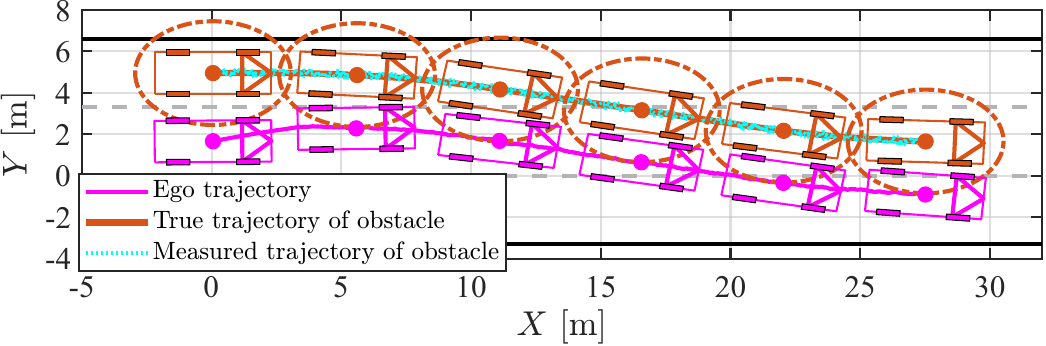}}
\hfill
\centering
\subfigure[Robust ECBF.]{\includegraphics[width=1.35\columnwidth]{./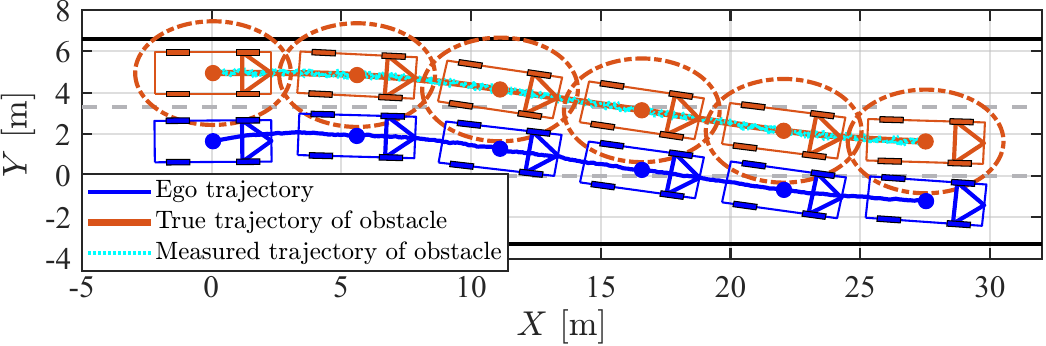}}
\hfill
\subfigure[Observer-based robust ECBF (proposed).]{\includegraphics[width=1.35\columnwidth]{./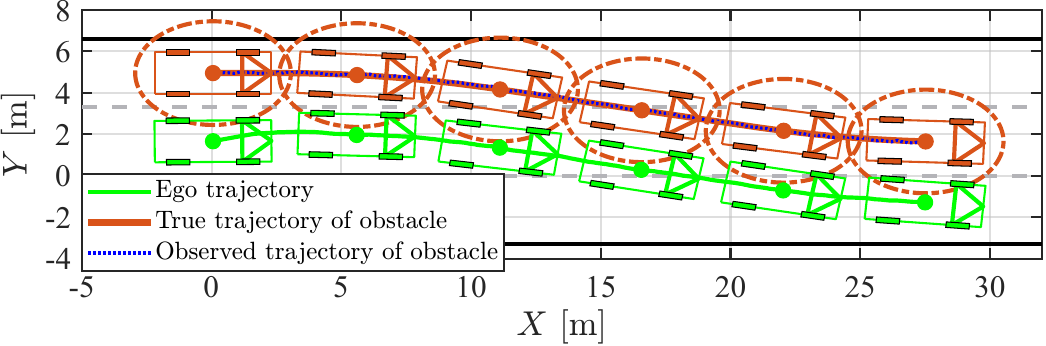}}
\caption{Trajectories generated by different methods in a dynamic environment with measurement uncertainties. The solid black line represents the road boundary, and the dotted grey line represents the road lane. The orange dotted-line circle is the unsafe set centered by the obstacle's center of geometry.}
\label{fig:traj}
\end{figure*}
\begin{figure}[t]
\centering
{\includegraphics[width=1\columnwidth]{./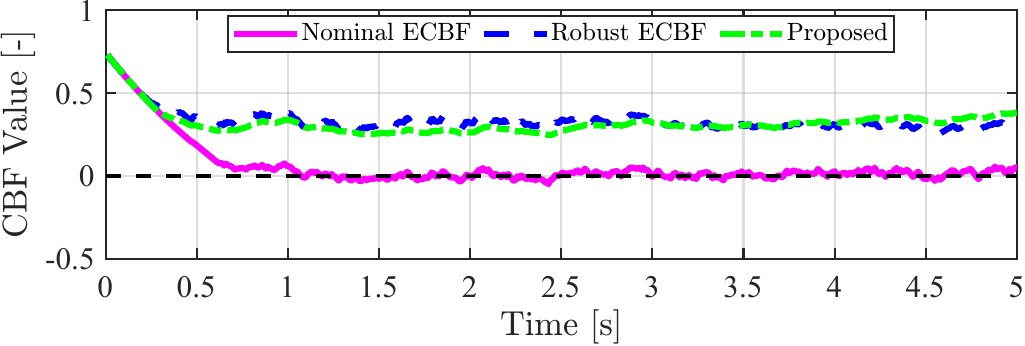}}
\caption{Value of CBF with different approaches.}
\label{fig:h}
\end{figure}
\begin{figure}[t]
\SetFigLayout{2}{1}
\centering
\subfigure[Slip angle $\beta$.]{\includegraphics[width=1\columnwidth]{./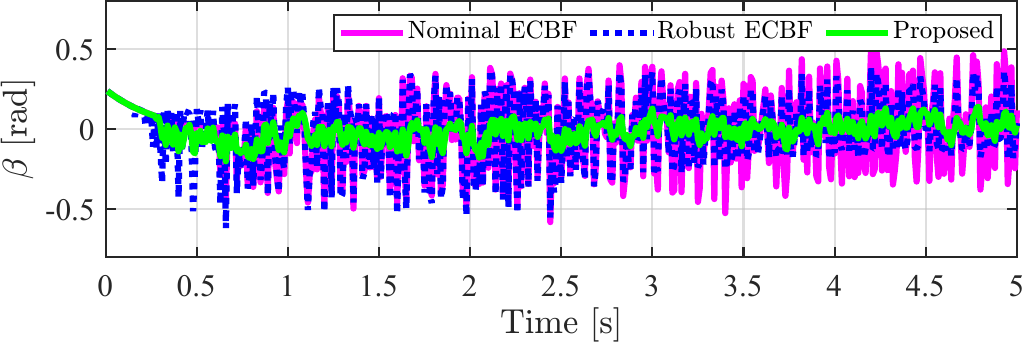}}
\hfill
\centering
\subfigure[Steering angle $\delta_f$.]{\includegraphics[width=1\columnwidth]{./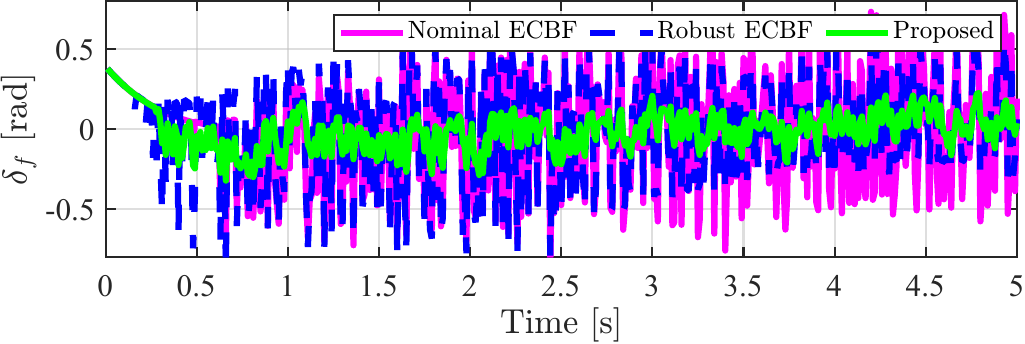}}
\caption{Control inputs generated by different approaches.}
\label{fig:input}
\end{figure}
\begin{figure}[htb]
\centering
{\includegraphics[width=0.99\columnwidth]{./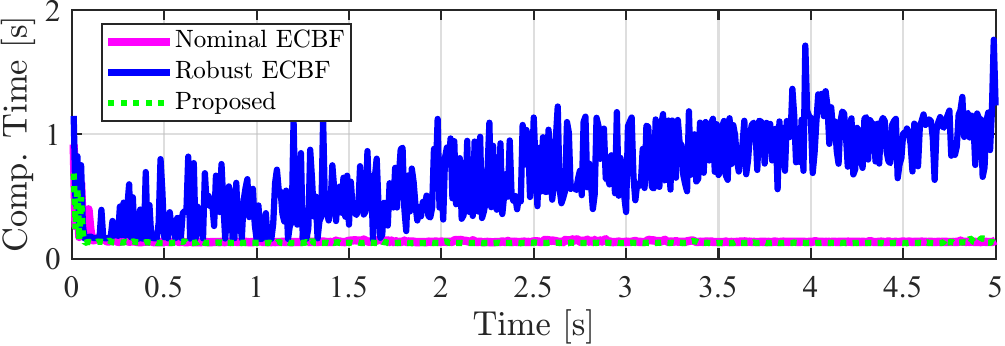}}
\caption{Computation time of different approaches}
\label{fig:comp}
\end{figure}
This section performs a specific case study to verify the performance of the proposed method. Section~\ref{sec:scenario and system modeling} describes the application scenario, the modeling of the control system and the obstacle, as well as the safe set for the controller. Then, the details for designing the BE observer in the case study are introduced in Section~\ref{sec: BE observer case study}. Finally, Section~\ref{sec:results discussion} presents the results and a discussion.

\subsection{Scenario Description and System Modeling}\label{sec:scenario and system modeling}
The case study considers an obstacle-avoidance scenario where the primary goal of the ego vehicle (control system) is to change lanes to the left, and at the same time, a surrounding vehicle (obstacle) in the left lane suddenly changes lanes to the right. The ego vehicle is modeled by a single-track kinematic model as
\begin{equation}
\begin{split}
\dot X &= v\cos(\psi+\beta),\\
\dot Y &= v\sin(\psi+\beta),\\
\dot \psi &= \frac{v}{lr}\sin(\beta),\\
\dot v &= a,\\
\beta &= {\rm arctan}\left( \frac{l_r}{l_f+l_r}\tan(\delta_f) \right),
\end{split}
\label{eq:nonaffine_sys}
\end{equation}
where $a$ is the acceleration at the vehicle's center of gravity, and $\delta_f$ is the front steering angle, they are the inputs of the system. Variable $\psi$ is the orientation angle of the vehicle in the ground coordinate system, $v$ is the speed of the vehicle in the vehicle frame. Parameters $l_f$ and $l_r$ describe the distance from the vehicle's center of gravity to the front and rear axles, respectively, and $\beta$ is the slip angle of the vehicle. 

To reformulate \eqref{eq:nonaffine_sys} as a nonlinear affine system as in \eqref{eq:sys}, this paper assumes that the slip angle $\beta$ in the lane-changing scenario is sufficiently small~\cite{zhou2023interaction}. Therefore it holds that $\cos(\beta) \approx 1$ and $\sin(\beta)\approx \beta$. Based on this approximation, the model describing the ego vehicle in the form of \eqref{eq:sys} is
\begin{align}
\underbrace{
\begin{bmatrix}
\dot{X}\\
\dot{Y}\\
\dot{\psi}\\
\dot{v}
\end{bmatrix}}_{\dot{x}}=
\underbrace{
\begin{bmatrix}
v\cos(\psi)\\
v\sin(\psi)\\
0\\
0
\end{bmatrix}}_{f(x)}+
\underbrace{
\begin{bmatrix}
0 & -v\sin(\psi)\\
0 & v\cos(\psi)\\
0 & v/l_r\\
1 & 0
\end{bmatrix}}_{g(x)}
\underbrace{
\begin{bmatrix}
a\\
\beta
\end{bmatrix}}_{u}.
\label{eq:affine_sys}
\end{align}

System \eqref{eq:affine_sys} with inputs $a$ and $\beta$ will be used for the design of the robust ECBF of the ego vehicle. Note that the direct control inputs of the vehicle system should be $a$ and $\delta_f$, where $\delta_f$ can be calculated using $\beta$ through (\ref{eq:nonaffine_sys}).

The surrounding vehicle is described by the original nonlinear kinematic model (\ref{eq:nonaffine_sys}). Define the longitudinal and lateral positions of the surrounding vehicle in the ground coordinate system as $(X^s, Y^s)$. Then, the safe set for collision avoidance between the ego vehicle and the obstacle can be formulated as
$$\mathcal{C}_\xi = \{\xi: a^s(X-X^s)^2+b^s(Y-Y^s)^2-1 \geq 0\},$$
where $a^s=\frac{1}{r^2_a}$ and $b^s=\frac{1}{r^2_b}$, $r_a$ and $r_b$ denote the semi-major axis and semi-minor axis of the ellipsoidal unsafe set. In the lane-change scenario, considering that the heading angle of the surrounding vehicle is relatively small, the parameters $r_a$ and $r_b$ can be chosen with sufficient margin to cover the variations of the orientation of the surrounding vehicle, such that the size of the unsafe set remains constant. Since the ego vehicle has to avoid collision with the position of the surrounding vehicle, the environmental state in the robust ECBF is defined as $e = [{X}^s \ {Y}^s]^{\rm T}$.

\subsection{Bounded-Error Observer for Robust ECBF}\label{sec: BE observer case study}
The BE observer is significant for designing the controller in Theorem~\ref{theo:final ECBF}. To design the observer \eqref{eq:observer} that satisfies \eqref{eq:BE_con}, this paper adopts a nonlinear observer from~\cite{jeon2019tracking} to estimate the state of the surrounding vehicle. The observer is designed by solving a semi-definite programming (SDP) problem
\begin{equation}\small \label{eq:SDP}
\begin{aligned}
\mathop{\rm minimize}\limits_{P, \ R} \quad & \gamma & \\
{\rm subject \ to}\quad
&{\bm 0} \preceq P -I, \ {\bm 0} \preceq
\begin{bmatrix}
    P&R^T\\R&\gamma I
\end{bmatrix},&\\
&A^TP+PA-C^TR-R^TC+2\theta P \preceq {\bm 0}, & \\
&\forall u^s\in u^s_{\rm grid}, \ \forall x^s\in x^s_{\rm grid},&
\end{aligned}
%\label{eq:SDP}
\end{equation}
\noindent where $ C =  c(x^s)$, $ D =  c_v(x^s)$, $A = \frac{\partial}{\partial \hat x^s}(f^s(\hat x^s, u^s))$, $\theta\geq 0$ is a design parameter, $\{x^s,u^s\}_{\rm grid}$ are the operating ranges of the surrounding vehicle containing finite elements, such that the number of constraints of \eqref{eq:SDP} is limited. From the solution of \eqref{eq:SDP}, the observer gain can be obtained as $L = P^{-1}R^T$ such that (\ref{eq:observer}) is represented as $\dot{\hat x}^s = f^s(\hat x^s, u^s)+L(y^s-C\hat x^s)$. To find the bounded set for the estimation error, i.e., $\mathcal{P}(t,\hat x^s)$, we additionally add the following linear matrix inequality (LMI)-based $H_\infty$ condition adopted from~\cite{zemouche2008observers} to the SDP problem (\ref{eq:SDP}):
\begin{equation}\small
\begin{split}
    \begin{bmatrix}
        A^TP+PA-C^TR-R^TC+I^{n_x} &-R^TD\\
        -D^TR & -\lambda^2I^{n_x}
    \end{bmatrix}\preceq {\bm 0},
\end{split}
\end{equation}
where $\lambda>0$ is a prescribed scalar disturbance attenuation level. 
With this LMI condition, $\mathcal{P}(t,\hat x^s)$ can be found as~\cite{zemouche2008observers}:
\begin{equation}
\mathcal{P}(t,\hat x^s) = \mathcal{P}(\hat x^s) = \{x^s\in {\mathcal{X}^s}: \norm{x^s-\hat x^s} \leq \lambda \bar w  \}.
\end{equation}
\begin{remark}
To stabilize the ego vehicle while tracking the desired speed $v_d$ and the goal state, instead of designing the desired control signal $u_{\rm des}$, the following control Lyapunov function (CLF)-based constraints~\cite{he2021rule} are added to \eqref{eq:finalECBF}:
\begin{equation}
L_fV_j(x)+L_gV_j(x)u \leq -\alpha_{v_j} V_j(x)+\rho_j,
\end{equation}
where $V_j:\mathcal{X}\rightarrow\mathbb{R}_{+}$ is a CLF, $\alpha_{v_j}$ is a class $\mathcal{K}$ function and $\rho_j$ is a relaxation variable with $j$ representing subscript $v$, $y$ or $\psi$. The CLF $V_j(x)$ is designed as
\begin{equation}
\begin{split}
    V_v(x) &= (v-v_d)^2,
    V_y(x) = (Y-Y_l)^2,
    V_\psi(x) = \psi^2,
\end{split}
\end{equation}
where $Y_l$ is the lateral coordinate of the center line of the target lane of the ego vehicle. 
\end{remark}

\subsection{Results and Discussion}\label{sec:results discussion}
In the simulation scenario, the measurement noise to the surrounding vehicle's position and velocity are designed as $\bar w=0.2 \ {\rm m}$ and $\bar d = 0.2\ {\rm m/s}$, respectively. The disturbance attenuation level in the observer is $\lambda = 0.8$. Both the ego vehicle and the obstacle drive with constant speeds during the simulation, where the techniques of CBF-based longitudinal and lateral control are discussed in~\cite{he2021rule}. The optimization problems are solved by YALMIP~\cite{lofberg2004yalmip} with MATLAB 2023b.

In the case study, the proposed method is compared with a nominal ECBF and a robust ECBF method in~\cite{molnar2022safety}. The trajectories of the ego vehicle generated by each method with respect to the true and measured trajectories of the obstacle are shown in Fig.~\ref{fig:traj}. Furthermore, the value of CBF, i.e., the value of function $H$ in~\eqref{eq:C_xi}, is compared in Fig.~\ref{fig:h}. The control inputs $\beta$ and $\delta_f$ of model \eqref{eq:affine_sys} by each method are shown in Fig.~\ref{fig:input}, and the computation time is compared in Fig.~\ref{fig:comp}. It is seen in Fig.~\ref{fig:traj} that the ego vehicle with the nominal ECBF is closer to the obstacle, while the robust ECBF and the proposed method keep a larger distance between the ego vehicle and the obstacle. This is consistent with the results in Fig.~\ref{fig:h}, where the CBF value of both robust ECBF and the proposed method is always positive, this implies that the safety constraint with respect to an uncertain obstacle can be satisfied by considering the uncertainties in the controller. Although the safety is guaranteed in both the robust ECBF and the proposed method, Fig.~\ref{fig:input} shows that the magnitude of control inputs by the proposed method is smaller. This is because the proposed method utilizes the observed environment state instead of the directly measured state that has comparatively larger uncertainties. In addition, the proposed method based on solving a QP problem also outperforms the robust ECBF, which solves an SOCP problem, in terms of the computation time, as shown in Fig.~\ref{fig:comp}. 

\section{Conclusion}
This paper proposes an observer-based safety-critical controller in dynamic environments using environmental control barrier functions. The proposed method is designed to be robust against the measurement uncertainties of moving obstacles in a dynamic environment. The simulation results of the collision-avoidance problem of an autonomous ego vehicle with an uncertain surrounding vehicle show that (1) The proposed method is safer than the nominal ECBF due to the consideration of environmental uncertainties, and (2) The method reduces conservatism and increases computational efficiency compared with the robust ECBF by applying the state observer of the obstacle system.

\end{document}